\documentclass[11pt,onecolumn,draftclsnofoot]{IEEEtran} 

\usepackage{graphicx, cite, verbatim, url}
\usepackage{stackrel}

\newtheorem{theorem}{Theorem}

\newtheorem{subsec:coding}{subsec:coding}
\newtheorem{fact}{Fact}

\newtheorem{lemma}{Lemma}

\newcommand{\ls}[1]  
   {\dimen0=\fontdimen6\the=#1\dimen0
    \advance\lineskip.5\fontdimen5\the\lineskip-\dimen0
    \lineskiplimit=.9\lineskip
    \baselineskip=\lineskip
    \advance\baselineskip\dimen0
    \normallineskip\lineskip
    \normallineskiplimit\lineskiplimit
    \normalbaselineskip\baselineskip
    \ignorespaces
   }
   
\begin{document}

\title{Downlink Power Allocation for Stored Variable-Bit-Rate Videos}

\author{Yingsong~Huang, \ Shiwen Mao, \ and Yihan Li
\thanks{Y. Huang, S. Mao, and Y. Li are with the Department of Electrical and Computer Engineering, Auburn University, Auburn, AL 36849-5201. Email: dzh0003@tigermail.auburn.edu, smao@ieee.org, yli@auburn.edu.}
\thanks{Part of this work was presented at ICST QShine 2010~\cite{Huang10}.} 
}

\maketitle

\begin{abstract}        
In this paper, we study the problem of power allocation for streaming multiple variable-bit-rate (VBR) videos in the downlink of a cellular network. We consider a deterministic model for VBR video traffic and finite playout buffer at the mobile users.  The objective is to derive the optimal downlink power allocation for the VBR video sessions, such that the video data can be delivered in a timely fashion without causing playout buffer overflow and underflow. The formulated problem is a nonlinear nonconvex optimization problem. We analyze the convexity conditions for the formulated problem and propose a two-step greedy approach to solve the problem.  We also develop a distributed algorithm based on the dual decomposition technique. The performance of the proposed algorithms are validated with simulations using VBR video traces under realistic scenarios. 
\end{abstract}

\pagestyle{plain}\thispagestyle{plain}

\begin{keywords} 
Convex optimization; distributed algorithm, downlink power control, video streaming, variable bit rate video.
\end{keywords}

\section{Introduction}


According to a recent study by Cisco, data traffic over wireless networks is expected to increase by a factor of 66 times by 2013.  Much of the increase in future wireless data traffic will be video related, as driven by the compelling need for ubiquitous access to multimedia content for mobile users.  Such drastic increase in video traffic will significantly stress the capacity of existing and future wireless networks. While new wireless network architectures and technologies are being developed to meet this ``grand challenge''~\cite{Zhao09, Su08, Alay09, Mao05}, it is also important to revisit existing wireless networks, to maximize their potential in carrying real-time video data. 

Quality of service guarantee in wireless networks is a challenging problem that has attracted tremendous efforts~\cite{Su08JNL, Tang07JSAC, Tang08TW, Zhang06CM, Tang07TW, Zhang02TON}. 
In this paper, we consider the problem of streaming multiple videos in the downlink of a cellular network. The system is interference limited: the capacity of a specific mobile user depends on the Signal to Interference-plus-Noise Ratio (SINR) at the user, which is a function of the power allocation for all the mobile users. Therefore, effective downlink power control is necessary for such a wireless video system to minimize the intra-cell interference for concurrent video sessions. 

We consider the challenging problem of streaming concurrent variable-bit-rate (VBR) videos in the cellular network. This is motivated by the fact that VBR video offers stable and superior quality over constant bit rate (CBR) videos.  Furthermore, many stored video content are VBR.  It is important to support such stored VBR videos over existing wireless networks without the need for transcoding. 
The main challenge in supporting VBR video stems from its high rate variability and complex autocorrelation structure, making it hard for network control and may cause frequent playout buffer underflow or overflow.  In this paper, we adopt a deterministic traffic model for stored VBR video, which jointly considers frame size, frame rate, and playout buffer size~\cite{Liew97, Salehi98, Feng00}. Unlike 
prior work, we exploit effective downlink power control to adjust the downlink capacities based on prior knowledge of frame sizes and palyout buffer occupancies. 


Specifically, we present a downlink power control framework for streaming multiple VBR videos in a cellular network.  With the deterministic VBR video traffic model, we formulate an optimization problem that jointly considers donwlink power control, intra-cell interference, VBR video traffic characteristics, playout buffer underflow and overflow constraints, and base station (BS) peak power constraint.  The objective is to maximize the total throughput, which can achieve high playout buffer utilization.  As a result, playout buffer underflow or overflow events can be minimized.  We analyze the convex/concave regions of the formulated problem and develop a two-step downlink power allocation algorithm for solving the problem.  We also develop a distributed algorithm based on the dual decomposition technique from convex optimization, in order to reduce the control and computation overhead at the BS.  We evaluate the performance of the proposed distributed algorithm with simulations using VBR video traces. Our simulation results verify the accuracy of the analysis and demonstrate the efficacy of the proposed algorithms.

The remainder of this paper is organized as follows. The system model is presented in Section~\ref{sec:sys}. We develop a two-step algorithm to solve the power allocation problem in Section~\ref{sec:PowerFrm}, and a distributed 
algorithm based on 
dual decomposition in Section~\ref{sec:DistPower}. 
Simulation results are presented in Section~\ref{sec:simulation} and related work are discussed in Section~\ref{sec:related}. Section~\ref{sec:conclusion} concludes this paper.

\section{Network and Video System Model} \label{sec:sys}

We consider the downlink of a cellular network. 
In the cell, a BS streams multiple VBR videos simultaneously to mobile users in the cell, which share the downlink bandwidth.  We assume the last-hop wireless link is the bottleneck, while the wired segment of a session path is reliable with sufficient bandwidth. 
Thus the corresponding video data is always available at the BS before the scheduled transmission time.  

VBR video traffic exhibits both strong asymptotic self-similarity and short-range correlation. 
A stochastic model capturing the complex auto-correlation structure 
often requires a large number of parameters, and is thus hard to be incorporated for 
scheduling real-time video data.  To this end, we adopt a {\em deterministic model} 
that considers frame sizes and playout buffers~\cite{Salehi98}. 
Let $D_n(t)$ denote the {\em cumulative consumption curve} 
of the $n$-{th} mobile user, representing the total amount of bits consumed by the decoder at time $t$.  The cumulative consumption curve is determined by video characteristics such as frame sizes and frame rates.
%
Assume mobile user $n$ has a playout buffer of size $b_n$ bits and its video has
$T_n$ frames. 
We can derive a {\em cumulative overflow curve} for the user as 
  \begin{equation} \label{eq:b}
    B_n(t) = \min\{D_n(t-1)+b_n, D_n(T_n)\}, \; 0 \leq t \leq T_n.
  \end{equation}
$B_n(t)$ is the the maximum number of bits that can be received at time $t$ 
without overflowing user $n$'s playout buffer.  
Finally we define {\em cumulative transmission curve} $X_n(t)$ as the cumulative amount of transmitted bits to user $n$ at time $t$.  To simplify notation, we assume the video sessions have identical frame rate and the frame intervals are synchronized.  
A time slot $t$ is equal to the $t$-th frame interval, denoted as $\tau$.

The three curves for user $n$ are illustrated in Fig.~\ref{fig:vbrtrans}.  A feasible transmission schedule will produce a cumulative transmission curve $X_n(t)$ that lies within $D_n(t)$ and $B_n(t)$, i.e., causing neither underflow nor overflow at the playout buffer.  In practice, $D_n(t)$'s are known for stored videos and can be delivered to the BS during session setup phase, and $B_n(t)$'s can be derived as in (\ref{eq:b}).  

	\begin{figure} [t] 
	\center{\includegraphics[width=3.6in]{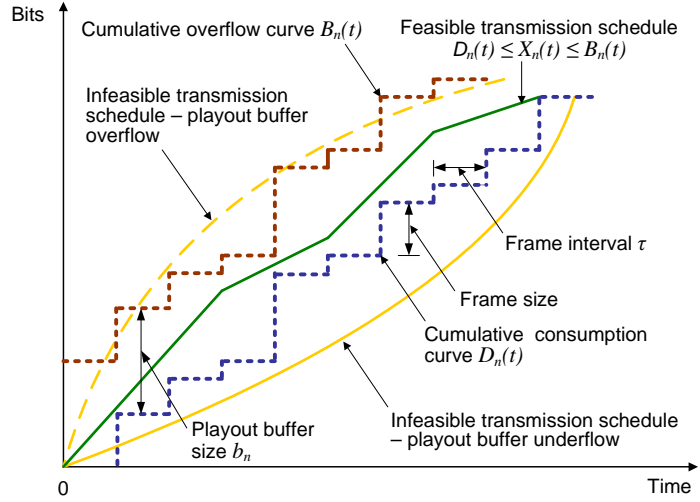}}
	\caption{Transmission schedules for VBR video session $n$.} 
	\label{fig:vbrtrans}
	\end{figure}

We consider $N$ subscribers in the cell and let $\mathcal{U}$ denote the set of users. 
In each time slot $t$, the BS transmits to each user $n$ with power $P_n(t)$ and the {\em power allocation} is $\vec{P}(t) = \left[ P_1(t), 
\cdots, P_n(t) \right]^T$. We also consider a {\em maximum transmit power} constraint $\bar{P}$, i.e., $\sum_{n \in \mathcal{U}} P_n(t) \leq \bar{P}$, for all $t$. 
When the power allocation $\vec{P}(t)$ is determined, 
the SINR at user $n$ can be written as~\cite{Lee05, Lee09}
\begin{equation} \label{eq:sinr}
  \gamma_n(\vec{P}(t)) = \frac{L_n G_nP_n(t)}{\beta\sum_{k \neq n}{G_n P_k(t)} + \eta_{n}},
\end{equation}
where $P_n$ is the power allocated to user $n$, $G_{n}$ is the path gain between the BS and user $n$,  $\eta_n$ is the noise power at user $n$, $L_n$ is a constant for user $n$ (e.g., processing gain),  and $\beta$ denotes the orthogonality factor, with $0 \leq \beta \leq 1$.  In this paper, we consider the case $\beta = 1$, where the SINR of a user not only depends on its own power allocation but also the power allocations of other users.

We assume slow-fading channels such that the path gains do not change within each time slot~\cite{Lee05}.  The downlink capacity $C_n(t)$ depends on the SINR at user $n$, the channel bandwidth $B_w$, and the transceiver design, such as modulation and channel coding.  Without loss of generality, we use the upper bound as predicted by Shannon's Theorem:
\begin{equation} \label{eq:cm}
  C_n(\vec{P}(t)) = B_w \log \left( 1 + \gamma_n(\vec{P}(t)) \right).
\end{equation} 

In time slot $t$, 
$C_n(t)\tau$ bits of video data will be delivered to user $n$. 
The cumulative transmission curve $X_n(t)$ is
\begin{equation} \label{eq:xt}
  X_n(0) = 0; \;\; X_n(t) = X_n(t-1) + C_n(t) \tau.
\end{equation}
For a feasible power allocation, the cumulative transmission curves should satisfy 
\begin{equation} \label{eq:bxd}
  D_n(t) \leq X_n(t) \leq B_n(t), \mbox{ for all } n, t,
\end{equation}
i.e., without causing playout buffer underflow or overflow.  

From (\ref{eq:cm})$\sim$({\ref{eq:bxd}), the lower and upper limit on the feasible SINR at user $n$ can be derived as
\begin{equation} \label{eq:sinr-highlow}
  \left\{ 
  \begin{array}{l}
    \gamma_{n}^{min}(t) = \max \left\{ \exp\left\{\frac{\max\left\{0, D_n(t)-X_n(t-1)\right\}}{B_w \tau} \right\}, \gamma_{n}^{th} \right\}  \\   
    \gamma_{n}^{max}(t) = \exp \left\{ \frac{B_n(t)-X_n(t-1)}{B_w \tau} \right\}, 
          \end{array} \right.  
\end{equation}
where $\gamma_{n}^{th}$ is the minimum SINR requirement imposed by the transceiver design. 
$\gamma_{n}^{min}(t)$ is the SINR that the just empties the buffer at the end of time slot $t$, without causing underflow; $\gamma_{n}^{max}(t)$ is the SINR that just fills up the buffer at the end of time slot $t$, without causing overflow. 

Generally, feasible power allocation $\vec{P}(t)$ is not unique for a given set of VBR video sessions. Among the set of feasible solutions, a schedule that transmits more data is more desirable since it provides 
more flexibility for optimizing future power allocations.  We formulate the problem of optimal downlink power control for VBR videos, termed Problem \textbf{A}, as
\begin{eqnarray} 
  \mbox{\bf (A) \bf maximize} && 
  \mbox{$\sum_{n \in \mathcal{U}}$} \log(1 + \gamma_n(t)) 
         \label{eq:optimal} \\
  \mbox{\bf subject to:} && \nonumber \\
  && \gamma_n(t) = \frac{L_n G_{n} P_n(t)}{\sum_{k \neq n}{G_{n}P_k(t)} + \eta_{n}}, \; \mbox{for all} \; n 
         \label{eq:vbrct1} \\
    && 
    \gamma_{n}^{min}(t) \leq \gamma_n(t) \leq \gamma_{n}^{max}(t),
       \; \mbox{for all} \; n \label{eq:vbrct2} \\
    && 
    \mbox{$\sum_{n \in \mathcal{U}}$} {P_n} \leq \bar{P}. \label{eq:vbrct3} 
\end{eqnarray}

In Problem \textbf{A}, 
the objective is to achieve the maximum buffer uitilization at the users, under playout buffer underflow and overflow constraints and BS maximum transmit power constraints.  This is a nonlinear nonconvex problem, to which traditional convex optimization techniques cannot directly apply. 
Due to the large variability of VBR traffic, the SINRs may assume values ranging from very low to very high, to avoid playout buffer underflow and overflow.  Thus the existing high SINR approximation~\cite{Chiang05} and low SINR approximation~\cite{Gjendemsj08} techniques cannot be directly applied. 

\section{Two-Step Downlink Power Allocation} \label{sec:PowerFrm}

In Problem \textbf{A}, we consider an interference-limited system, where the capacity of downlink $n$ depends on the power allocations for all the users.  
In the following, we first derive conditions for the optimal solution, and then present a two-step power allocation algorithm for solving Problem {\bf A}. 

\begin{lemma} \label{lm1}
If there exists a feasible power allocation $\vec{P}(t)$ that achieves  $\gamma_n^{max}(t)$ for all $n$, the solution is optimal. 
\end{lemma}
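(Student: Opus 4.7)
The plan is to observe that the objective decouples term-by-term and each term is controlled by a per-user upper bound already built into the feasible set. First, I would note that the scalar function $f(x) = \log(1+x)$ is strictly monotonically increasing for $x \geq 0$, so $\sum_{n \in \mathcal{U}} \log(1+\gamma_n(t))$ is a monotonically increasing function of each $\gamma_n(t)$ separately.

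Second, I would invoke constraint (\ref{eq:vbrct2}), which forces $\gamma_n(t) \leq \gamma_n^{max}(t)$ at every feasible point of Problem \textbf{A}. Combining this with monotonicity of $\log(1+\cdot)$, for any feasible $\vec{P}(t)$ we have
\[
  \sum_{n \in \mathcal{U}} \log(1+\gamma_n(t)) \;\leq\; \sum_{n \in \mathcal{U}} \log(1+\gamma_n^{max}(t)),
\]
so the right-hand side is a universal upper bound on the objective over the entire feasible region.

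Third, by hypothesis there exists a feasible $\vec{P}(t)$, i.e., one satisfying (\ref{eq:vbrct1})--(\ref{eq:vbrct3}), whose induced SINRs $\gamma_n(\vec{P}(t))$ coincide with $\gamma_n^{max}(t)$ for every $n$. This allocation attains the universal upper bound and is therefore optimal for Problem \textbf{A}. I would underscore that the argument is light precisely because the hypothesis simultaneously saturates every summand; the difficulty in Problem \textbf{A} in general arises from the fact that the coupling through the interference term in (\ref{eq:vbrct1}) together with the power budget (\ref{eq:vbrct3}) frequently makes it impossible to hit all $\gamma_n^{max}(t)$ at once, so this lemma only applies in a fortunate regime.

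The only point requiring real care, and thus the closest thing to an obstacle, is verifying that ``achieves $\gamma_n^{max}(t)$ for all $n$'' is compatible with feasibility of $\vec{P}(t)$ -- in particular with the budget $\sum_{n} P_n \leq \bar{P}$ and with the lower SINR bound $\gamma_n(t) \geq \gamma_n^{min}(t)$. Since the lemma already assumes feasibility, both are granted, and optimality follows immediately from monotonicity and the per-user upper bound derived above.
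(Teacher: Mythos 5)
Your argument is correct and matches the paper's reasoning: the paper likewise observes that attaining $\gamma_n^{max}(t)$ for all $n$ saturates the per-user upper bound (phrased there as "all buffers full, cannot improve without overflow"), which together with the monotonicity of $\log(1+\cdot)$ in each summand yields optimality. Your version simply makes explicit the universal upper bound $\sum_{n}\log(1+\gamma_n^{max}(t))$ implied by constraint (\ref{eq:vbrct2}); the substance is the same.
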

%
\begin{proof}
If the feasible power allocation $\vec{P}(t)$ achieves $\gamma_n^{max}(t)$ for all $n$, then all the user buffers are full at the end of the time slot, according to (\ref{eq:sinr-highlow}). The objective value (\ref{eq:optimal}) cannot be further improved without causing buffer overflow. Thus the solution is optimal. 
\end{proof}

\begin{lemma} \label{lm2}
If the upper limit $\gamma_n^{max}(t)$ cannot be achieved for every user $n$, then the optimal power allocation $\vec{P}(t)$ satisfies $\sum_{n \in \mathcal{U}}{P_n(t)} = \bar{P}$. 
\end{lemma}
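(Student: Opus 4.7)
The plan is to argue by contradiction: suppose $\vec{P}^*(t)$ is optimal for Problem~\textbf{A} yet $\sum_{n \in \mathcal{U}} P_n^*(t) < \bar{P}$, and exhibit a feasible perturbation that strictly increases the objective~(\ref{eq:optimal}). Because the hypothesis of this lemma is the complement of Lemma~\ref{lm1}'s, at least one user $\ell$ must satisfy $\gamma_\ell(\vec{P}^*(t)) < \gamma_\ell^{max}(t)$ at the optimum.

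The natural tool is a uniform rescaling $\vec{P}(\alpha) = \alpha \vec{P}^*(t)$ with $\alpha > 1$. Substituting into~(\ref{eq:sinr}) and simplifying gives
\[
\gamma_n(\alpha \vec{P}^*(t)) \;=\; \frac{L_n P_n^*(t)}{\sum_{k \neq n} P_k^*(t) + \eta_n/(\alpha G_n)},
\]
which is strictly increasing in $\alpha$ because $\eta_n > 0$. In the generic subcase where every user satisfies $\gamma_n(\vec{P}^*(t)) < \gamma_n^{max}(t)$, continuity lets me pick $\alpha$ slightly greater than $1$ so that $\alpha \sum_n P_n^*(t) \le \bar{P}$ (possible by the assumed slack) and every perturbed SINR still lies strictly inside $[\gamma_n^{min}(t), \gamma_n^{max}(t)]$; each $\gamma_n$ strictly grows, hence every term of~(\ref{eq:optimal}) strictly grows, contradicting optimality.

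The remaining subcase is when some users already saturate $\gamma_n(\vec{P}^*(t)) = \gamma_n^{max}(t)$, so a pure uniform scaling would violate their overflow bound. For this I would use a \emph{compensated} perturbation: letting $S = \{n : \gamma_n(\vec{P}^*(t)) < \gamma_n^{max}(t)\}$, scale $P_k$ up for $k \in S$ by a factor $1 + \epsilon$, and simultaneously perturb each $P_n$ for $n \notin S$ by just enough to pin $\gamma_n$ at $\gamma_n^{max}(t)$. Enforcing this pinning via~(\ref{eq:sinr}) yields a small linear system in the compensations that, for sufficiently small $\epsilon$, admits a unique nonnegative solution of size $O(\epsilon)$. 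A first-order analysis then shows each $\gamma_k$ for $k \in S$ strictly increases, the users in $\mathcal{U} \setminus S$ remain at their upper bounds, and the total power stays below $\bar{P}$, again contradicting optimality. The main obstacle lies precisely in this second subcase: one must verify that the compensation system is well-posed (its coupling matrix has spectral radius below $1$), that the $O(\epsilon)$ interference changes do not push any $\gamma_m$, $m \in S$, below $\gamma_m^{min}(t)$, and that the net first-order effect on~(\ref{eq:optimal}) is strictly positive once the gains on $S$ are netted against any small losses elsewhere.
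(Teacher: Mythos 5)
Your argument is essentially the paper's, plus a repair of a step the paper leaves implicit. The paper's proof is exactly your first subcase: take any feasible $\vec{P}'(t)$ with $\sum_n P'_n(t) < \bar{P}$, scale it by $\kappa = \bar{P}/\sum_n P'_n(t) > 1$, and observe that
$\gamma_n(\kappa\vec{P}') = \kappa L_n G_n P'_n / (\kappa\sum_{k\neq n}G_nP'_k + \eta_n) > \gamma_n(\vec{P}')$
because the noise term is not scaled; since $\log(1+x)$ is increasing, the objective strictly improves, so no allocation with slack in the power budget can be optimal. The paper simply asserts that the scaled allocation is still feasible, i.e., it does not address the possibility that some users already sit at $\gamma_n^{max}(t)$ and would be pushed past their overflow bound by uniform scaling --- which is precisely your second subcase. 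Your compensated perturbation is a legitimate way to close that hole, and the verification items you list can all be discharged cleanly: with the powers of $S$ fixed at $(1+\epsilon)P_k^*$, the pinning equations for $\mathcal{U}\setminus S$ are exactly the linear system (\ref{eq:matrixform}) restricted to $\mathcal{U}\setminus S$, whose coupling matrix has spectral radius below $1$ because a positive solution already exists at $\epsilon=0$ (Fact~\ref{fact1}); moreover, since the uniformly scaled point $(1+\epsilon)\vec{P}^*$ over-satisfies the pinned targets, the compensated solution is componentwise at most $(1+\epsilon)P_n^*$ on $\mathcal{U}\setminus S$, which immediately gives $\gamma_k$ strictly increasing for every $k\in S$ (so no $\gamma_m^{min}$ violation) while the terms for $\mathcal{U}\setminus S$ are unchanged, so there are no ``losses elsewhere'' to net out and the objective strictly increases. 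As written, your second subcase is a plan rather than a proof --- you should either carry out the monotonicity argument above or note that it follows from standard interference-function/Perron--Frobenius facts --- but the approach is sound and is, if anything, more rigorous than the published proof.
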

\begin{proof}
Consider a feasible power allocation $\vec{P}'(t) = [ P'_1(t), P'_2(t),\cdots, P'_N(t) ]^T$ and $\sum_{n \in \mathcal{U}}{P'_n(t)} < \bar{P}$.  We can construct another feasible power allocation $\vec{P}''(t) = [P''_1(t), P''_2(t), \cdots, P''_N(t)]^T$, such that $P''_n(t) = \kappa \cdot P'_n(t)$, for all $n$, and $\kappa \cdot \sum_{n \in \mathcal{U}}{P'_n(t)} = \sum_{n \in \mathcal{U}}{P''_n(t)} \leq \bar{P}$, where $\kappa > 1$. For the SINR at user $n$, we have
\begin{eqnarray}
\gamma_n(\vec{P}''(t)) 
&=& \frac{L_n G_n P''_n(t)}{\sum_{k \neq n}{G_n P''_k(t)} 
+ 
\eta_n} \nonumber \\
&=& \frac{\kappa L_n G_n  P'_n(t)}{\sum_{k \neq n}{\kappa G_n} P'_k(t) 
+ 
\eta_n} \nonumber \\ 
&>& \frac{{\kappa}L_n G_n P'_n(t)}{\sum_{k \neq n} \kappa G_n P'_k(t) + \kappa \eta_n} \nonumber \\
&=& \gamma_n(\vec{P}'(t)). \nonumber
\end{eqnarray}
It follows that $\sum_{n \in \mathcal{U}}\log(1 + \gamma_n(\vec{P}''(t))) > \sum_{n \in \mathcal{U}}\log(1 + \gamma_n(\vec{P}'(t)))$, since 
$\log(1+x)$ is an increasing function of $x$. 

Choosing $\kappa = \bar{P} / \sum_{n \in \mathcal{U}}{P'_n(t)}$, we can construct a feasible solution $\vec{P}'''(t) = \kappa \cdot \vec{P}'(t)$, such that $\sum_{n \in \mathcal{U}}{P'''_n(t)} = \bar{P}$. Then we have $\gamma_n(\vec{P}'''(t)) > \gamma_n(\vec{P}'(t))$ and $\sum_{n \in \mathcal{U}}\log(1 + \gamma_n(\vec{P}'''(t))) > \sum_{n \in \mathcal{U}}\log(1 + \gamma_n(\vec{P}'(t)))$. That is, any feasible solution with $\sum_{n \in \mathcal{U}}{P'_n(t)} < \bar{P}$ will be dominated by feasible solutions with $\sum_{n \in \mathcal{U}}{P'''_n(t)} = \bar{P}$. 
We conclude that the optimal solution $\vec{P}(t)$ must satisfy $\sum_{n \in \mathcal{U}}{P_n(t)} = \bar{P}$.
\end{proof} 

We have the following result for the optimal solution of Problem {\bf A},
which directly follows Lemmas~\ref{lm1} and~\ref{lm2}. 
 
\begin{theorem} \label{th1}
A solution to Problem {\bf A} is optimal if 
(i) it achieves the maximum SINR $\gamma_n^{max}(t)$ for all $n$; or 
(ii) its total transmit power is $\bar{P}$. 
\end{theorem}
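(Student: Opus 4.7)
The plan is to combine Lemma~\ref{lm1} and Lemma~\ref{lm2} by case analysis on whether disjunct (i) is achievable in the current slot. The theorem is advertised as an immediate consequence of the two lemmas, so the proof should be short and essentially a bookkeeping argument; no new machinery is needed.

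First I would handle disjunct (i). If a feasible $\vec{P}(t)$ attains $\gamma_n^{max}(t)$ simultaneously for every $n \in \mathcal{U}$, Lemma~\ref{lm1} applies verbatim: every playout buffer is exactly full at the end of slot $t$, so the objective in (\ref{eq:optimal}) cannot be increased without violating the overflow side of (\ref{eq:vbrct2}), and such a $\vec{P}(t)$ is optimal. No additional argument is required for this branch.

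Second I would handle disjunct (ii) via the contrapositive of Lemma~\ref{lm1}. If no feasible profile can attain all the upper SINR limits simultaneously, then the hypothesis of Lemma~\ref{lm2} is satisfied, and the lemma forces every optimal $\vec{P}(t)$ (which exists by compactness of the feasible set and continuity of the log-sum objective) to expend total power exactly $\bar{P}$, i.e., to satisfy (ii). Combining the two branches, any optimal solution to Problem~\textbf{A} satisfies disjunct (i) or disjunct (ii), which is what Theorem~\ref{th1} asserts.

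The subtle step, and the one I expect to draw the most scrutiny, is the logical status of disjunct (ii): Lemma~\ref{lm2} establishes (ii) as a \emph{necessary} property of the optimum rather than as a blanket sufficient condition on an arbitrary profile, since in a nonconvex problem two allocations with $\sum_n P_n(t) = \bar{P}$ can yield very different objective values. Theorem~\ref{th1} is therefore best read as a localization result that pins the optimum inside the union of the sets described by (i) and (ii); this is precisely the reading used by the two-step algorithm in Section~\ref{sec:PowerFrm}, where one first tests whether (i) is realizable and otherwise restricts the search to the facet $\sum_{n \in \mathcal{U}} P_n(t) = \bar{P}$. I would state this interpretation explicitly in the writeup so that the ``if'' in the theorem is not read as sufficiency for an arbitrary feasible allocation satisfying (ii).
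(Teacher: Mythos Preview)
Your proposal is correct and matches the paper's own treatment: the paper simply states that Theorem~\ref{th1} ``directly follows Lemmas~\ref{lm1} and~\ref{lm2},'' and your case analysis is exactly the intended bookkeeping. Your closing remark that disjunct~(ii) is really a \emph{necessary} localization condition rather than a sufficient one is a valid and useful clarification of the theorem's wording, and it is consistent with how the paper actually deploys the result in the two-step algorithm.
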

%

Theorem~\ref{th1} implies that we can examine the SINR (or buffer) constraints and the peak power constraint separately. In the rest of this section, we present a two-step power allocation algorithm for solving Problem {\bf A}.  We first examine Problem \textbf{A} under condition (i) in Theorem~\ref{th1}, to 
obtain Problem \textbf{B} as
\begin{eqnarray} 
  \mbox{\bf (B)} && 
  \gamma_n^{max}(t) = \frac{L_n G_nP_n(t)}{\sum_{k \neq n}{G_n P_k(t)} + \eta_{n}}, \; \mbox{for all} \; n, \label{eq:opt1} \\
  && \mbox{\bf subject to:} \nonumber \\
  && \;\;\;\;\;\;\;\;\;\;\;\; \mbox{$\sum_{n \in \mathcal{U}}$} {P_n} \leq \bar{P}.
\end{eqnarray}

In Problem \textbf{B}, (\ref{eq:opt1}) is a system of linear equations of power allocation $\vec{P}(t)$. Rearranging the terms, we can rewrite (\ref{eq:opt1}) in the matrix form as:
\begin{eqnarray} \label{eq:matrixform}
   \left( \mathbf{I} - \mathbf{F} \right) \vec{P}(t) =  \vec{u}, \;\; \mbox{for } \vec{P}(t) \succ \vec{0},
\end{eqnarray}
where $\mathbf{I}$ is the {\em identity matrix}, $\mathbf{F}$ is a $N \times N$ matrix with 
\begin{equation} \label{eq:F}
F_{nm} = \left\{ \begin{array}{ll}
				0, & \mbox{if } n = m \\
				\gamma_n^{max} / L_n, & \mbox{otherwise},
          \end{array}
    \right.
\end{equation}
and $\vec{u} = [\eta_1\gamma_1^{max}/L_n G_1, \eta_2\gamma_2^{max}/L_n G_2, \cdots, \eta_{N}\gamma_N^{max}/L_n G_N]^T$.  

Since all the variables are nonnegative, $\mathbf{F}$ is a non-negative matrix. According to the Perron-Frobenius Theorem, we have the following equivalent statements~\cite{Mitra93}:
\begin{fact} \label{fact1}
The following statements are equivalent: (i) there exits a feasible power allocation satisfying (\ref{eq:matrixform}); (ii) the spectrum radius of $\mathbf{F}$ is less than 1; (iii) the reciprocal matrix $(\mathbf{I} - \mathbf{F})^{-1} = \sum_{k=0}^{\infty} \left( \mathbf{F} \right)^k$ exists and is component-wise positive. 
\end{fact}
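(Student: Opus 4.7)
The plan is to prove the three equivalences by establishing a cycle (ii) $\Rightarrow$ (iii) $\Rightarrow$ (i) $\Rightarrow$ (ii), exploiting the non-negativity of $\mathbf{F}$ and the classical Perron--Frobenius machinery. Throughout, I will use the fact that $\mathbf{F}$ has zero diagonal and strictly positive off-diagonal entries whenever $\gamma_n^{max}/L_n > 0$, so $\mathbf{F}$ is a non-negative (indeed, irreducible) matrix, and also that $\vec{u} \succ \vec{0}$ since noise powers, path gains, and maximum SINRs are strictly positive.

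For (ii) $\Rightarrow$ (iii), I would invoke the Neumann series expansion: whenever the spectral radius satisfies $\rho(\mathbf{F}) < 1$, the series $\sum_{k=0}^\infty \mathbf{F}^k$ converges absolutely and equals $(\mathbf{I}-\mathbf{F})^{-1}$. Since each $\mathbf{F}^k$ is component-wise non-negative, so is the sum; and because $\mathbf{F}$ is irreducible with positive off-diagonal entries, for sufficiently large $k$ the entries of $\mathbf{F}^k$ are strictly positive on every coordinate, forcing $(\mathbf{I}-\mathbf{F})^{-1}$ to be component-wise strictly positive. The implication (iii) $\Rightarrow$ (i) is then immediate: setting $\vec{P}(t) = (\mathbf{I}-\mathbf{F})^{-1}\vec{u}$ produces a vector which is a product of a component-wise positive matrix and a strictly positive vector, hence itself lies in the positive orthant and by construction satisfies (\ref{eq:matrixform}); thus it is a feasible power allocation.

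The main obstacle is the reverse implication (i) $\Rightarrow$ (ii), since here I cannot just manipulate the matrix equation algebraically but must extract spectral information from the mere existence of a positive solution. The standard device is a Perron eigenvector argument: by the Perron--Frobenius theorem applied to $\mathbf{F}^T$, there exists a non-negative, non-zero left eigenvector $\vec{v}$ with $\vec{v}^T \mathbf{F} = \rho(\mathbf{F})\,\vec{v}^T$. Left-multiplying (\ref{eq:matrixform}) by $\vec{v}^T$ yields $(1 - \rho(\mathbf{F}))\,\vec{v}^T \vec{P}(t) = \vec{v}^T \vec{u}$. Because $\vec{P}(t) \succ \vec{0}$ and $\vec{v} \succeq \vec{0}$ is non-zero, the inner product $\vec{v}^T \vec{P}(t) > 0$, and likewise $\vec{v}^T \vec{u} > 0$; the equation therefore forces $1 - \rho(\mathbf{F}) > 0$, closing the cycle.

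A minor technical step that deserves care is verifying that the Perron eigenvector is not orthogonal to the positive vectors involved, which reduces to checking that $\vec{v}$ has at least one strictly positive entry; this is guaranteed by the non-triviality clause of Perron--Frobenius. Once the three implications are in place, the statement follows, and the constructive form $(\mathbf{I}-\mathbf{F})^{-1} = \sum_{k=0}^\infty \mathbf{F}^k$ is obtained as a byproduct of (ii) $\Rightarrow$ (iii), providing not only existence but also a convergent iterative computation of the feasible power allocation.
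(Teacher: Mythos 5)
Your proof is correct and follows the standard Perron--Frobenius/Neumann-series argument; the paper itself offers no proof of this statement, simply citing it as a known result from the power-control literature (Mitra, 1993), and your cycle (ii)~$\Rightarrow$~(iii)~$\Rightarrow$~(i)~$\Rightarrow$~(ii) is essentially the argument that the cited reference contains. One minor inaccuracy: your claim that $\mathbf{F}^k$ is entrywise positive for all sufficiently large $k$ requires primitivity, which fails for $N=2$ (there $\mathbf{F}$ has zero diagonal and alternates between diagonal and anti-diagonal support under powering); however, the conclusion you need survives, since $\sum_{k\geq 0}\mathbf{F}^k \succeq \mathbf{I}+\mathbf{F}$, which is already componentwise positive when every off-diagonal entry $\gamma_n^{max}/L_n$ is positive.
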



\begin{table} [t] 
\begin{center}
\caption{Two-Step Power Allocation Algorithm: Step I}
\label{tab:heuristic1}
\begin{tabular}{ll}
	\hline
	1  & BS obtains $b_n$, $D_n$, and $B_n$, and computes 
	     $\gamma_n^{max}$ for all user $n$; \\  
	2  & BS tests the existence of feasible solutions using 
	     (\ref{eq:matrixform}); \\
  3  & IF (\ref{eq:matrixform}) is solvable, compute its solution 
       $\vec{P}(t)$; \\
     & ELSE, go to Step II of the algorithm, as given in
       Table~\ref{tab:heuristic2}; \\
	4  & IF $\sum_{n \in \mathcal{U}}{P_n(t)} \leq \bar{P}$, stop with the optimal solution $\vec{P}(t)$; \\
	   & ELSE go to Step II of the algorithm, as given in
       Table~\ref{tab:heuristic2}; \\
	\hline
\end{tabular}
\end{center}
\end{table}

Based on Theorem~\ref{th1} and Fact~\ref{fact1}, we derive the {\em first step} of the two-step power allocation algorithm,  
as given in Table~\ref{tab:heuristic1}.  If Problem {\bf B} is solvable, the Step I algorithm in Table~\ref{tab:heuristic1} produces the optimal solution for Problem {\bf A} according to Theorem~\ref{th1}.  Otherwise, we derive Problem {\bf C} by applying Lemma~\ref{lm2}, as
\begin{eqnarray} 
  \mbox{\bf (C) \bf maximize} && 
  \mbox{$\sum_{n \in \mathcal{U}}$} \log(1 + \gamma_n(t)) 
         \label{eq:optimalc} \\
  \mbox{\bf subject to:} && \nonumber \\
  && \gamma_n(t) = \frac{L_n P_n(t)}{\bar{P} - P_n(t) + A_{n}}, \; \mbox{for all} \; n 
         \label{eq:vbrctc1} \\
    && 
    P_{n}^{min}(t) \leq P_n(t) \leq P_{n}^{max}(t), \mbox{for all} \; n
        \label{eq:vbrctc2} \\
    && 
    \mbox{$\sum_{n \in \mathcal{U}}$} {P_n(t)} = \bar{P}, \label{eq:vbrctc3} 
\end{eqnarray}
where $A_n = \eta_n/G_n$ is the ratio of noise power and channel gain, representing the quality of the user $n$ downlink channel. $P_{n}^{min}(t)$ and $P_{n}^{max}(t)$ are solved from (\ref{eq:vbrct2}) and (\ref{eq:vbrctc1}), as 
\begin{equation} \label{eq:pmaxmin}
  \left\{ \begin{array}{l}
    P_{n}^{min}(t) = \gamma_n^{min} (\bar{P} + A_n) / (L_n + \gamma_n^{min}) \\
    P_{n}^{max}(t) = \gamma_n^{max} (\bar{P} + A_n) / (L_n + \gamma_n^{max}).
          \end{array} \right.
\end{equation} 

Since the total transmit power is $\bar{P}$, the objective value in (\ref{eq:optimalc}) and the SINR in (\ref{eq:vbrctc1}) for each user only depends on its own power. 
Note that all the constraints are now linear. To solve Problem {\bf C}, we examine the objective function to see if it is convex. We omit time index $t$ in the following for brevity. 

\begin{lemma} \label{lm3}
The capacity of each user $n$, $C_n$, has one {\em inflection point} $P_n^*$: when $P_n < P_n^*$, $C_n$ is in concave; when $P_n > P_n^*$, $C_n$ is convex. 
\end{lemma}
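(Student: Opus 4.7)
The plan is to exploit the fact that constraint (22) of Problem \textbf{C} fixes the total transmit power, so that the interference term in (20) becomes linear in $P_n$ alone and $C_n$ reduces to a one-variable function on which I can run a direct second-derivative test. Substituting (20) into the Shannon expression (8) and adding $1$ inside the logarithm yields, after clearing the fraction,
\[
C_n(P_n) \;=\; B_w\bigl[\log(c+(L_n-1)P_n) \;-\; \log(c-P_n)\bigr],
\]
where $c \triangleq \bar{P}+A_n>0$ and $P_n\in(0,c)$. This algebraic step is the key simplification: $C_n$ becomes the difference of two logarithms with linear arguments, one increasing and one decreasing in $P_n$.

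Differentiating twice gives
\[
C_n''(P_n) \;=\; B_w\!\left[\frac{1}{(c-P_n)^2} - \frac{(L_n-1)^2}{(c+(L_n-1)P_n)^2}\right].
\]
I would set $C_n''(P_n)=0$ and, using that both denominators are positive on the feasible interval, take the positive square root. This collapses the equation to a \emph{linear} equation in $P_n$, whose unique solution
\[
P_n^{*} \;=\; \frac{(L_n-2)(\bar{P}+A_n)}{2(L_n-1)}
\]
is the candidate inflection point.

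It remains to verify that $C_n''$ actually changes sign at $P_n^{*}$ in the manner claimed. I would argue this by examining the two endpoints of the feasible range: as $P_n \to c^-$ the first term blows up and $C_n''\to +\infty$, so $C_n$ is convex near the pole; at $P_n=0$, $C_n''$ has the sign of $1-(L_n-1)^2$, which is negative precisely when $L_n>2$. Since each of the two rational functions in the bracketed expression is monotone in $P_n$ on $(0,c)$, their difference crosses zero at most once, so $P_n^{*}$ is the unique inflection point and the sign pattern is concave-then-convex as stated.

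The main obstacle is a hidden assumption rather than a technical difficulty: the concave-then-convex pattern genuinely requires $L_n>2$ (which matches typical CDMA processing gains, as alluded to after (7)); for $L_n\le 2$ the function is convex throughout and the statement is vacuous in one direction. A secondary bookkeeping issue is to note whether $P_n^{*}$ falls inside $[P_n^{min},P_n^{max}]$, since the subsequent algorithm will only care about the inflection point when it lies in the feasible interval; if it does not, $C_n$ is either entirely concave or entirely convex on the feasible range and the ensuing optimization simplifies accordingly.
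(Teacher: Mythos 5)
Your proposal is correct and follows essentially the same route as the paper: a direct second-derivative test on $C_n(P_n)$ yielding the unique root $P_n^* = \frac{L_n-2}{2(L_n-1)}(\bar{P}+A_n)$, with your rewriting of $C_n$ as $B_w[\log(c+(L_n-1)P_n)-\log(c-P_n)]$ merely streamlining the algebra that the paper carries out on the composite expression. Your explicit monotonicity argument for the single sign change and your observation that the concave-then-convex pattern requires $L_n>2$ actually fill in details the paper dismisses with ``it can be shown'' and its informal $L_n \gg 1$ assumption.
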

\begin{proof}
Taking the first and second derivatives of the objective function (\ref{eq:optimalc}) with respect to $P_n$, we have
\begin{eqnarray} \label{eq:derivatives}
  \frac{\partial{C_n(P_n)}}{\partial{P_n}}  &=& \frac{L_n(\bar{P}+A_n)}{(\bar{P}-P_n+A_n)[\bar{P}+(L_n-1)P_n+A_n]} \\
  \frac{\partial^2{C_n(P_n)}}{\partial{P_n}^2} &=& \frac{-L_n [(L_n-2)(\bar{P}+A_n)+2(1-L_n)P_n](\bar{P}+A_n)}{[(\bar{P}-P_n+A_n)^2+L_n P_n(\bar{P}-P_n+A_n)]^2}. 
\end{eqnarray}
Since $P_n \leq \bar{P}$ and $A_n > 0$, both the first and second derivatives exist. Letting $\frac{\partial^2{C_n(P_n)}}{\partial{P_n}^2} = 0$, we derive the unique inflection point 
\begin{equation} \label{eq:pnstar}
  P_n^* = \frac{L_n-2}{2(L_n-1)}(\bar{P}+A_n).
\end{equation}
When $P_n < P_n^*$, it can be shown that $\frac{\partial^2{C_n(P_n)}}{\partial{P_n}^2} < 0$; 
when $P_n > P_n^*$, it can be shown that $\frac{\partial^2{C_n(P_n)}}{\partial{P_n}^2} > 0$. 
\end{proof}

	\begin{figure} [t] 
	\center{\includegraphics[width=3.6in, height=2.6in]{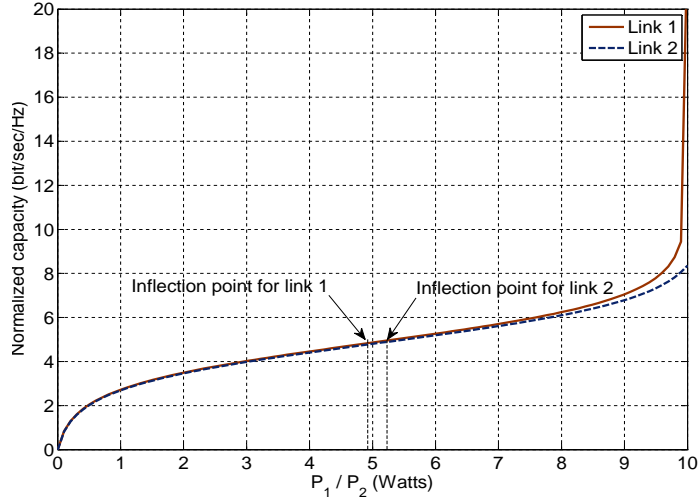}}
	\caption{Normalized capacity curves and inflection points for a two-user system, where link $1$ has better quality than link $2$, i.e. $A_1 < A_2$.}
	\label{fig:cp}
	\end{figure}

The normalized capacities for a two-user system is plotted in Fig.~\ref{fig:cp}, with the inflection points marked.  It can been observed that the curves are concave on the left hand side of the inflection points and convex on the right hand side of the inflection points. The processing gain is usually large for practical systems (e.g., $L_n = 128$ in IS-95 CDMA). 
We assume $L_n \gg 1$ in the following analysis.   

%
\begin{theorem} \label{th2}
For Problem {\bf C}, there can be at most two links operating in the convex region if $L_n \geq (4 \bar{P} + 6 A_n)/(\bar{P} + 3 A_n)$.  
\end{theorem}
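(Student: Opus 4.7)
The plan is to reduce the theorem to a simple pigeonhole argument based on the total-power constraint $\sum_{n \in \mathcal{U}} P_n = \bar{P}$ from (\ref{eq:vbrctc3}), after recognizing that the stated hypothesis on $L_n$ is nothing more than an algebraic re-encoding of the bound $P_n^* \geq \bar{P}/3$.

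First, I would start from the closed form of the inflection point in (\ref{eq:pnstar}), namely $P_n^* = \frac{L_n-2}{2(L_n-1)}(\bar{P}+A_n)$, and ask under what condition on $L_n$ we have $P_n^* \geq \bar{P}/3$. Cross-multiplying the inequality $\frac{L_n-2}{2(L_n-1)}(\bar{P}+A_n) \geq \frac{\bar{P}}{3}$ (valid since $L_n \gg 1$ makes $L_n - 1 > 0$) and collecting terms in $L_n$, the condition rearranges to $L_n(\bar{P} + 3A_n) \geq 4\bar{P} + 6A_n$, which is precisely $L_n \geq (4\bar{P} + 6A_n)/(\bar{P} + 3A_n)$. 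Thus the hypothesis of the theorem is equivalent to the geometric statement $P_n^* \geq \bar{P}/3$ for every user $n$.

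Next, I would argue by contradiction. Let $S \subseteq \mathcal{U}$ be the set of links operating in the convex region, so by Lemma~\ref{lm3} every $n \in S$ satisfies $P_n > P_n^*$. Suppose $|S| \geq 3$. Summing the strict inequalities over any three distinct links in $S$ and invoking the just-derived bound $P_n^* \geq \bar{P}/3$, we get
\begin{equation}
\sum_{n \in S} P_n \;>\; \sum_{n \in S} P_n^* \;\geq\; 3 \cdot \frac{\bar{P}}{3} \;=\; \bar{P}.
\end{equation}
Since $\sum_{n \in S} P_n \leq \sum_{n \in \mathcal{U}} P_n = \bar{P}$ by (\ref{eq:vbrctc3}) and nonnegativity of $P_n$, this is a contradiction. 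Hence $|S| \leq 2$.

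I do not expect a real obstacle here; the only thing to be careful about is the direction of the algebraic equivalence (the hypothesis on $L_n$ must translate to the correct side of the inflection-point bound, and the sign of $L_n - 1$ must be verified to be positive so the cross-multiplication preserves the inequality). Both are immediate under the standing assumption $L_n \gg 1$ stated just before the theorem, so the proof collapses to the one-line counting argument above.
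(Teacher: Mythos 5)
Your proof is correct and follows essentially the same route as the paper: both identify the hypothesis $L_n \geq (4\bar{P}+6A_n)/(\bar{P}+3A_n)$ as the algebraic restatement of $P_n^* \geq \bar{P}/3$ and then invoke the total-power constraint to rule out three links simultaneously exceeding their inflection points. Your version merely makes the pigeonhole/contradiction step explicit, which the paper leaves implicit by simply solving $3P_n^* = \bar{P}$ for $L_n$.
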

\begin{proof}
The reflection point is $P_n^* = \frac{L_n-2}{2(L_n-1)}(\bar{P}+A_n)$. As $L_n \rightarrow \infty$, we have $P_n^* = 0.5(\bar{P} + A_n)$.  Only one link can operate in the convex region due to constraint (\ref{eq:vbrctc3}).  Since $\frac{\partial P_n^*}{\partial L_n} > 0$, $P_n^*$ is an increasing function of $L_n$.  When $1 \ll L_n < \infty$, we have $P_n^* < 0.5 \cdot (\bar{P} + A_n)$.  Letting $3 P_n^* = \bar{P}$, we have $L_n = (4 \bar{P} + 6 A_n)/(\bar{P} + 3A_n)$. 
\end{proof}


For a clean channel where $A_n \approx 0$, $L_n \geq 4$ will guarantee at most two links operating in the convex region.  The following results are on the impact of channel quality $A_n = \eta_n/G_n$.  

\begin{theorem} \label{th3}
For a given $L_n$, 
the inflection point $P_n^*$ is an increasing function of 
$A_n$. 
For two links $i$ and $j$ with the same transmit power $P$, if $A_i < A_j$, 
we have $C_i(P, A_i) > C_j(P, A_j)$ and $\frac{\partial{C_i(P_i, A_i)}}{\partial{P_i}} |_{P_i = P} > \frac{\partial{C_j(P_j,A_j)}}{\partial{P_j}}|_{P_j = P} > 0$. 
\end{theorem}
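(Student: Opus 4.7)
The plan is to treat the three assertions of Theorem~\ref{th3} separately, since each reduces to a direct calculus argument on the explicit closed forms that have already appeared in the excerpt (equations (\ref{eq:pnstar}), (\ref{eq:vbrctc1}), and the first derivative in (\ref{eq:derivatives})).

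For the first claim, I would just differentiate the formula $P_n^* = \frac{L_n-2}{2(L_n-1)}(\bar{P}+A_n)$ with respect to $A_n$, obtaining $\frac{\partial P_n^*}{\partial A_n} = \frac{L_n-2}{2(L_n-1)}$. Under the standing assumption $L_n \gg 1$ (more precisely $L_n \geq 2$, which is certainly implied by $L_n \geq 4$ as used in Theorem~\ref{th2}), this coefficient is strictly positive, so $P_n^*$ is increasing in $A_n$.

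For the second claim, I would write the capacity as $C_n(P, A_n) = \log\!\left(1 + \frac{L_n P}{\bar{P}-P+A_n}\right)$. Since $\log(1+x)$ is strictly increasing and the argument $\frac{L_n P}{\bar{P}-P+A_n}$ is strictly decreasing in $A_n$ (the denominator is positive because $P \leq \bar{P}$ and $A_n > 0$, and it grows with $A_n$), we immediately get $C_i(P, A_i) > C_j(P, A_j)$ whenever $A_i < A_j$.

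The third claim is the only place where a small amount of algebra is required. Starting from $\frac{\partial C_n}{\partial P_n} = \frac{L_n(\bar{P}+A_n)}{(\bar{P}-P_n+A_n)[\bar{P}+(L_n-1)P_n+A_n]}$ in (\ref{eq:derivatives}), positivity at $P_n = P$ is clear since every factor is positive ($\bar{P}-P+A_n > 0$ by $P \leq \bar{P}$ and $A_n > 0$). To get the strict inequality, I would substitute $y = \bar{P}+A_n$ and view the right-hand side as a function $f(y) = \frac{L_n y}{(y-P)(y+(L_n-1)P)}$ of $y$ alone with $P$ fixed. A direct differentiation gives a numerator proportional to $-[y^2 + (L_n-1)P^2]$, which is manifestly negative; hence $f$ is strictly decreasing in $y$, equivalently strictly decreasing in $A_n$. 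Therefore $A_i < A_j$ yields $\left.\frac{\partial C_i}{\partial P_i}\right|_{P_i = P} > \left.\frac{\partial C_j}{\partial P_j}\right|_{P_j = P} > 0$.

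The only real obstacle is being careful with the sign of $\frac{df}{dy}$ in the third part: one has to check that the constant term $-(L_n-1)P^2$ in the denominator of $f(y)$ does not spoil the monotonicity argument, but the computation above shows the negative contribution is dominated in the derivative and the net sign is unambiguously negative. Everything else is routine substitution into formulas already established in the paper.
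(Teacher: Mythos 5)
Your proposal is correct and follows essentially the same route as the paper: differentiating $P_n^* = \frac{L_n-2}{2(L_n-1)}(\bar{P}+A_n)$ with respect to $A_n$ for the first claim, and evaluating the capacity expression and the first derivative in (\ref{eq:derivatives}) for the second and third. The paper's own proof merely asserts that the second part "can be easily shown by evaluating" the relevant formulas; your monotonicity-in-$y=\bar{P}+A_n$ computation, with numerator $-L_n[y^2+(L_n-1)P^2]<0$, is exactly the verification the paper leaves implicit.
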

\begin{proof}
The first part can be easily shown by the first derivative of $P_n^*$ with respect to $A_n$, which is $\frac{\partial{P_n^*}}{\partial A_n}=\frac{L_n-2}{2(L_n-1)} > 0$, for $L_n>2$. The second part can be easily shown by evaluating (\ref{eq:optimalc}), (\ref{eq:vbrctc1}), and (\ref{eq:derivatives}). 
\end{proof}


Theorem~\ref{th3} shows that, for two links in the convex region with the same initial power $P$, allocating more power to the link with better quality can achieve larger objective value than alternative ways of splitting the power between the two links (i.e., achieving the multi-user diversity gain). Based on the above analysis, we develop the {\em second step} of the power allocation algorithm for solving Problem {\bf C}, as given in Table~\ref{tab:heuristic2}. 
In Table~\ref{tab:heuristic2}, Lines $3 \sim 4$ tests the feasibility of the power allocation. If the sum of the total minimum required power is larger than the BS peak power, there is no feasible power allocation 
and there will be buffer underflow.  
In this case, we select users with ``good'' channels for transmission and suspend the users with ``bad'' channels. 

\begin{table} [t] 
\begin{center}
\caption{Two-Step Power Allocation Algorithm: Step II}
\label{tab:heuristic2}
\begin{tabular}{ll}
	\hline
	   & \underline{Initialization}: \\
	1  & BS obtains $b_n$, $D_n$, and $B_n$ for all user $n$; \\
  2  & BS computes $\gamma_n^{max}$, $\gamma_n^{min}$, and $P_n^*$, for all $n$; \\
	3  & BS computes the minimum required sum power $\bar{P}_{min} = \sum_{n \in \mathcal{U}}{P_n^{min}}$ \\
	   & and gap $\Delta_P = \bar{P} - \bar{P}_{min}$; \\
	4  & IF $\bar{P}_{min} > \bar{P}$, remove links from $\mathcal{U}$, according to descending order \\
	   & of $A_n$, until $\bar{P}_{min} \leq \bar{P}$; \\
	5  & Compute $R_n = \frac{C_n(\min\{P_n^{max}, P_n^{min}+\Delta_P\}) - C_n(P_n^{min})}{\min\{P_n^{max}, P_n^{min}+\Delta_P\}-P_n^{min}},$ 
	 		 for all $P_n^{max} > P_n^*$;\\
	   & \underline{Phase 1}: \\
	6  & Select all the users satisfying $P_n^{min} < P_n^{*}$ as a set $\mathcal{U}'\subseteq \mathcal{U}$;\\
	7  & Solve Problem \textbf{C} under constraints \\
	   & $P_n^{min} \leq P_n \leq \min{(P_n^{max}, P_n^*)}$ and 
	   $\sum_{n \in \mathcal{U'}}{P_n} \leq \bar{P}' = \bar{P} - \sum_{n \in \bar{\mathcal{{U}}}'}{P_n^{min}}$,  \\
	   & where $\bar{\mathcal{U}}'$ is the complementary set of $\mathcal{U}'$, and obtain solution $\vec{P}_1$;\\
	8  & Calculate $R_n$ by updating $P_n^{min}$ to the solution in Line $7$ and assign \\
	   & the remaining power to the nodes in set $\mathcal{U}$, in descending order of $R_n$; \\
	9  & Obtain the Phase $1$ solution, $\vec{P}_{p_1}$, and objective value $f_{p_1}$; \\
	   & \underline{Phase 2}:\\
	10 & Select the link with the maximum $R_n$, and assign all the available \\
	   & power $\bar{P} - \bar{P}_{min}$ to the link, until either all the power is assigned or \\
	   & the link attains power $P_n^{max}$; \\
	11 & IF there is still power to allocate, THEN select all the nodes in set $\mathcal{U} \backslash{n}$ \\
	   & and repeat Lines $5 \sim 8$;\\
	12 & Obtain the Phase $2$ solution, $\vec{P}_{p_2}$, and objective value $f_{p_2}$; \\
	   & \underline{Phase 3}: \\
	13 & Select the first 2 links with the largest $R_n$'s, and assign all the availble \\
	   & power $\bar{P} - \bar{P}_{min}$ to the links, until all the power is assigned or the links \\
	   & attains power $P_n^{max}$, and repeat Line $11$; \\
	14 & Obtain the Phase $3$ solution, $\vec{P}_{p_3}$, and objective value $f_{p_3}$; \\
		 & \underline{Decision}:\\
	15 & Choose the largest objective value among $f_{p_1}$, $f_{p_2}$ and $f_{p_3}$, and stop \\
	   & with the corresponding power assignment; \\ 
	\hline
\end{tabular}
\end{center}
\end{table}


The Step II algorithm checks the three possible solution scenarios for Problem {\bf C} depending on the network status and video parameters: 
\begin{itemize}
  \item All links operate in the convex region; 
  \item One link operates in the convex region and the remaining links operate in the concave region 
  \item Two links operate in the convex region and the remaining links operate in the concave region. 
\end{itemize}
Each of the three phases in Table~\ref{tab:heuristic2} considers the optimality condition for one of the three scenarios.  
In particular, Phase 1 first optimizes the power allocation in the concave region and then allocates the remaining power to the links that could be moved to the convex region. 
Phase 2 allocates as much power as possible to the link with the best quality, which could work in the convex region. 
Phase 3 attempts to move the second best link to the convex region if the total power constraint is not violated.  
Usually when $L_n$ and $n$ are large, Phase 3 will rarely occur due to the peak power constraint.  

In Table~\ref{tab:heuristic2}, Line 7 presents a convex optimization component, for which several effective solution techniques can be applied. In the following section, we describe a distributed algorithm for Line 7 based on dual decomposition. 

\section{Distributed Algorithm} \label{sec:DistPower}

As discussed in Section~\ref{sec:PowerFrm}, the core of the Step II algorithm is to solve Problem {\bf C} in the concave region (see Fig.~\ref{fig:cp}).  In this section, we present a distributed algorithm for this purpose, where the users are involved in power allocation to reduce the control and computation overhead on the BS.  In the concave region, we have Problem {\bf D} as
\begin{eqnarray} 
  \hspace{-0.17in} \mbox{\bf (D) \bf maximize} && 
  \mbox{$\sum_{n \in \mathcal{U}}$} \log(1 + \gamma_n(t)) 
         \label{eq:optimald} \\
  \mbox{\bf subject to:} && \nonumber \\
  && \gamma_n(t) = \frac{L_n P_n(t)}{\bar{P} - P_n(t) + A_{n}}, \; \mbox{for all} \; n \label{eq:vbrctd1} \\
  && 
  P_{n}^{min}(t) \leq P_n(t) \leq \min\{P_n^{max}, P_n^*\}, \mbox{for all} \; n \label{eq:vbrctd2} \\
  && 
  \mbox{$\sum_{n \in \mathcal{U}}$} {P_n(t)} \leq P_{tot}, \label{eq:vbrctd3} 
\end{eqnarray}
where $P_{tot} \leq \bar{P}$ is the total power budget for the links 
in the concave region. For brevity, we define $P_n^{th} = \min\{P_n^{max}, P_n^*\}$ and drop the time slot index $t$ in the following analysis.  

Introducing non-negative Lagrange multipliers $\lambda_n, \mu_n$, and $\nu$ for constraints (\ref{eq:vbrctd2}) and (\ref{eq:vbrctd3}), respectively, we obtain the Lagrange function as
\begin{eqnarray}
&& \hspace{-0.1in} \mathcal{L}(\vec{P}, \vec{\lambda}, \vec{\mu}, \nu) \\
&& \hspace{-0.25in} = \mbox{$\sum_{n \in \mathcal{U}}$} \left[ \log \left( 1 + \frac{L_n P_n}{\bar{P} - P_n+A_n} \right) + \lambda_n(P_n - P_n^{min}) \right] + \nonumber\\
&& \hspace{-0.1in}  \mbox{$\sum_{n \in \mathcal{U}}$} \left[ \mu_n(P_n^{th} - P_n) \right] + \nu \left(P_{tot} - \mbox{$\sum_{n \in \mathcal{U}}$} P_n \right) \nonumber \\
&& \hspace{-0.25in} = \mbox{$\sum_{n \in \mathcal{U}}$} \left[ \mathcal{L}_n(P_n, \lambda_n, \mu_n, \nu) \hspace{-0.025in} + \hspace{-0.025in} (\mu_nP_n^{th} \hspace{-0.025in} - \hspace{-0.025in} \lambda_nP_n^{min}) \right] \hspace{-0.025in} + \hspace{-0.025in} \nu{P_{tot}}, \nonumber
\end{eqnarray}
where 
\begin{equation} \label{eq:ln}
  \mathcal{L}_n(P_n, \lambda_n, \mu_n, \nu) = \log \left( 1 + \frac{L_n P_n}{\bar{P} - P_n+A_n} \right) + (\lambda_n-\mu_n-\nu)P_n.
\end{equation}
Since $\mathcal{L}_n$ only depends on user $n$'s own parameters, we have the dual decomposition for each user $n$. For given Lagrange multipliers (or, prices) $\hat{\lambda_n}$, $\hat{\mu_n}$, and $\hat{\nu}$, we have the following subproblem for each user $n$.
\begin{eqnarray}
\hat{P}_n(\hat{\lambda_n}, \hat{\mu_n}, \hat{\nu}) = 
\stackbin[P_n^{min} \leq P_n \leq P_n^{th}]{}{\arg\max}
\mathcal{L}_n(P_n, \hat{\lambda_n}, \hat{\mu_n}, \hat{\nu}), \; \mbox{for all} \; n.  
\label{eq:sub}
\end{eqnarray}  
Subproblem (\ref{eq:sub}) has a unique optimal solution due to the strict concavity of $\mathcal{L}_n$. We use the gradient method~\cite{Bertsekas95} to solve (\ref{eq:sub}), where user $n$ iteratively updates its power $P_n$ as:
\begin{eqnarray} \label{eq:gradientP} 
&& \hspace{-0.00in} P_n(l+1) \\ 
&& \hspace{-0.15in} = \left[ P_n(l) + \theta(l) \nabla_n \mathcal{L}_n(P_n) \right]^* \nonumber \\
&& \hspace{-0.175in} = \left[ P_n(l) + \theta(l) \frac{L_n(\bar{P}+A_n)}{(\bar{P}-P_n+A_n)(\bar{P}+(L_n-1)P_n+A_n)} +  
\theta(l) (\lambda_n-\mu_n-\nu) \right]^*, \nonumber  
\end{eqnarray}
where $[\cdot]^*$ denotes the projection onto the range of 
$[P_n^{min}, P_n^{th}]$.  The update stepsize $\theta(l)$ varies in each step $l$ and is determine by the Armijo Rule~\cite{Bertsekas95}. Due to the strict concavity of $\mathcal{L}_n$, the series $\{P_n(1), P_n(2), \cdots \}$ will converge to the optimal solution $\hat{P}_n$ as $l \rightarrow \infty$.

For a given optimal solution for problem (\ref{eq:sub}), $\vec{\hat{P}}=[\hat{P}_1, \cdots, \hat{P}_N]^T$, the master dual problem is as follows:
\begin{eqnarray} 
  \mbox{ \bf minimize} && 
  \mathcal{L}(\vec{\hat{P}}, \vec{\lambda}, \vec{\mu}, \nu) 
         \label{eq:master} \\ 
  \mbox{\bf subject to:} && 
  \lambda_n, \mu_n, \nu \geq 0, \; \mbox{for all} \; n.
\end{eqnarray}
Since the objective function (\ref{eq:master}) is differentiable, we also apply the gradient method to solve the master dual problem~\cite{Bertsekas95}, where the Lagrange multipliers are iteratively updated as
\begin{equation} \label{eq:gradientL}
\left\{ \begin{array}{l}
  \lambda_n(l+1) = [\lambda_n(l) - \alpha_{\lambda}(l) \cdot \frac{\partial{\mathcal{L}(\vec{\lambda}, \vec{\mu}, \nu)}}{\partial{\lambda_n}}]^+, \; \mbox{for all} \; n \\
  \mu_n(l+1) = [\mu_n(l) - \alpha_{\mu}(l) \cdot \frac{\partial{\mathcal{L}(\vec{\lambda}, \vec{\mu}, \nu)}}{\partial{\mu_n}}]^+, \; \mbox{for all} \; n \\
  \nu(l+1) = [\nu(l) - \alpha_{\nu}(l) \cdot \frac{\partial{\mathcal{L}(\vec{\lambda}, \vec{\mu}, \nu)}}{\partial{\nu}}]^+, 
        \end{array} \right. 
\end{equation}
where $[\cdot]^+$ denotes the projection onto the nonnegative axis. The update stepsizes are also determined by the Armijo Rule~\cite{Bertsekas95}. As the dual variables $\vec{\lambda}(l), \vec{\mu}(l), \nu(l)$ converge to their stable values as $l \rightarrow \infty$, the primal variables $\vec{\hat{P}}$ will also converge to the optimal solution~\cite{Palomar06}.

The distributed algorithm is given in Table~\ref{tab:distalg}, where the above procedures are repeated iteratively.  The BS first broadcasts Lagrange multipliers to the users.  Each user updates its requested power as in (\ref{eq:gradientP}), using local information $P_n^{min}$, $P_n^{max}$, $P_n^*$, $A_n$, $L_n$, 
and BS peak power $\bar{P}$.  Each user then sends its requested power back to the BS, and the BS will updates the Lagrange multipliers as in (\ref{eq:gradientL}). And so forth, until the optimal solution is obtained. 

\begin{table} [t] 
\begin{center}
\caption{Distributed Power Control Algorithm}
\label{tab:distalg}
\begin{tabular}{ll}
	\hline
	1  & BS sets $l=0$ and prices $\lambda_n(l), \mu_n(l), \nu(l)$ equal to some \\
	   & nonnegative initial values for all $n$;\\
  2  & BS broadcasts the prices to the selected users;\\
	3  & Each user locally solves problem (\ref{eq:sub}) 
	     as in (\ref{eq:gradientP}) to obtain its \\
	   & requested power; \\
	4  & Each user sends its requested power to the BS; \\
	5  & BS updates prices $\lambda_n(l), \mu_n(l), \nu(l)$ 
	     as in (\ref{eq:gradientL}) and broadcasts \\
	   & new prices $\lambda_n(l+1), \mu_n(l+1), \nu(l+1)$ for all $n$; \\
	6  & Set $l = l+1$ and go to Step $3$, until the solution converges; \\

	\hline
\end{tabular}
\end{center}
\end{table}


\section{Simulation Results} \label{sec:simulation}

We evaluate the proposed algorithms with simulations, using 
a cellular network with 20 users. 
The downlink bandwidth is $1$ MHz. The path gain averages are $G_{n} = d_{n}^{-4}$, where $d_{n}$ is the physical distance from the BS to user $n$.  The downlink channel is modeled as log-normal fading with zero mean and variance $8$ dB~\cite{Lee05}.
The processing gains are set to $L_n =  128$ for all $n$. The distance
$d_{n}$ is uniformly distributed in [100m, 1000m]. The device temperature is $T_0 = 290$ Kelvin and the equivalent noise bandwidth is $B_w = 1MHz$. The BS peak power constraints is set to $\bar{P} = 10$ Watts.  We use three VBR movies traces, \textit{Star Wars}, \textit{NBC News}, and \textit{Tokyo Olympics}, 
from the Video Trace Lib~\cite{Videtrace}. 
Each playout buffer is set to $1.5$ times of the largest frame size in the requested VBR video.


In the simulations, the proposed power allocation algorithm is executed at the beginning of each time slot. In Fig.~\ref{fig:distpower}, we plot the cumulative consumption, overflow and transmission curves for \textsl{NBC News} transmitted to user 2. The top sub-figure is the overview of $10,000$ frames.  We also plot the curves from frame $2,620$ to $2,640$ in the bottom sub-figure. We observe that the cumulative transmission curve $X(t)$ is very close to the cumulative overflow curve $B(t)$, indicating that the algorithm always aim to maximize the transmission rate as allowed by the buffer and power constraints. The playout buffers are almost fully utilized most of the time. There is no playout buffer overflow and underflow for the entire range of $10,000$ frames.  Among the {\em NBC News} frames, frame $2,625$ is the largest frame. We let seven out of the $20$ links playout this largest frame simultaneously at time slot $2,625$ in the simulation. There is no buffer 
underflow under such heavy load. 

\begin{figure} [t] 
\center{\includegraphics[width=3.6in]{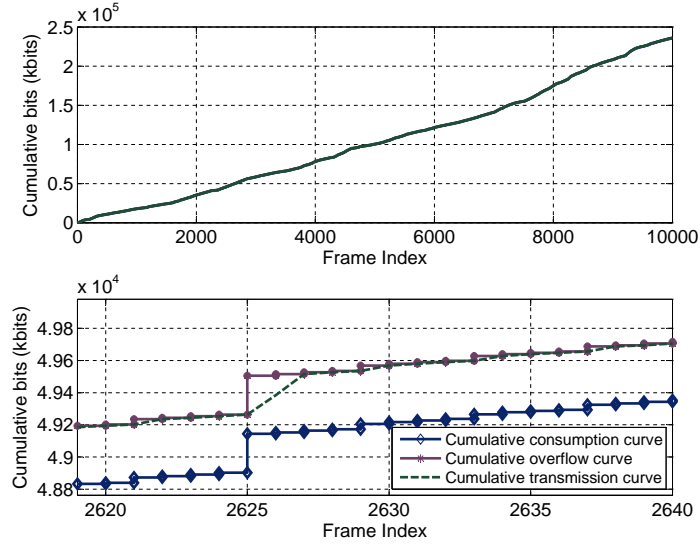}}
\caption{Transmission schedule for video \textsl{NBC News} to user 2.} \label{fig:distpower}
\end{figure}

In Fig. \ref{fig:distconverge}, we plot the power allocation and price updates for all the $20$ links in one of the 10,000 time slots. The power and prices converges in around $70$ steps. The converged power vector is 
$\vec{\hat{P}}$ 
= [0.0022, 1.396, 0.0356, 0.0024, 1.396, 0.0351, 0.0016, 1.396, 0.0356, 0.0026, 1.396, 0.0356, 0.0023, 1.396, 0.0356, 0.0018, 1.396, 0.0356, 0.0034, 1.394] Watts.

\begin{figure} [t] 
\center{\includegraphics[width=3.6in]{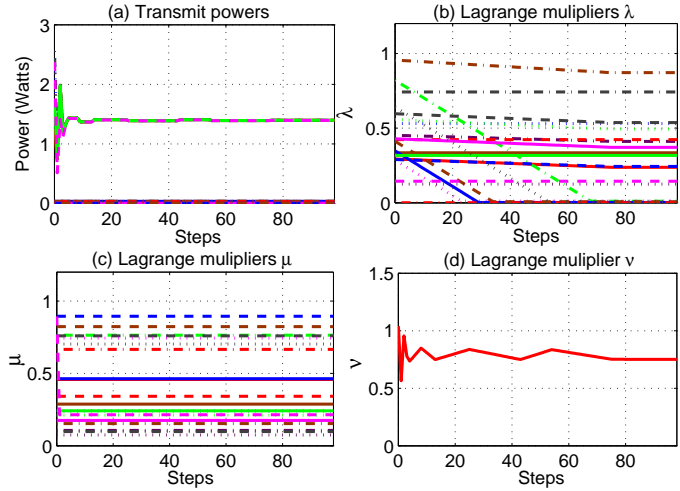}}
\caption{Convergence of power allocation and Lagrange multipliers.} \label{fig:distconverge}
\end{figure}

Finally, we compare the proposed algorithm with a diversity-aware power allocation scheme, where the BS allocates power according to channel quality. With this scheme, the best channel $n$ will be assigned power to achieve its maximum required power $P_{n}^{max}(t)$. Then the second best channel will be allocated power until its maximum required power is achieved, and so forth until all of $\bar{P}$ is allocated. 

We simulate $50$ users with the same network and video settings. We compare the algorithms by their average playout buffer utilization. In Fig.~\ref{fig:bufferutil}, we plot the average buffer utilization from frame $2,000$ to $2,999$.  It can be seen that the proposed algorithm consistently achieves high buffer utilization, ranging from 60\% to 100\%.  The diversity scheme achieves buffer utilization around 30\% except for frames from 2,250 to 2,400.  
Such considerably higher buffer utilization translates to better video quality: there is no buffer overflow or underflow for proposed algorithm, while there is buffer underflow in $17\%$ of the playout frames for the diversity scheme.
 
\begin{figure} [t] 
\center{\includegraphics[width=3.6in, height=2.6in]{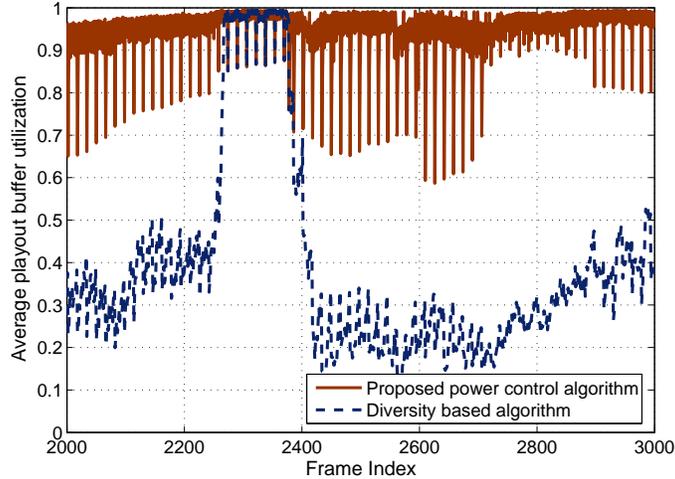}}
\caption{Average playout buffer utilization.} 
\label{fig:bufferutil}
\end{figure}

\section{Related Work} \label{sec:related}

There have been several papers on VBR video over wired network.
Due to long-range-dependent (LRD) VBR video traffic, the piecewise-constant-rate transmission and transport (PCRTT) method was used to optimize 
certain objectives while preserving continuous video playout. 
In~\cite{Liew97}, Liew and Chan developed bandwidth allocation schemes for multiple VBR videos to share a CBR channel. 
In~\cite{Salehi98}, Salehi et al. presented an optimal algorithm 
for smoothing VBR video over a CBR link. 
Feng and Liu~\cite{Feng00} introduced a critical bandwidth allocation algorithm to reduce the number of bandwidth variations and to maximize receiver buffer utilization.  
Due to the fundamental difference between wireless and wired links, these techniques cannot be directly applied to the problem of VBR video over wireless networks.

The downlink power allocation problem was studied in~\cite{Lee05, Lee09}, aiming to obtain the power allocation that maximizes a properly defined system utility. A distributed algorithm based on dynamic pricing and partial cooperation was proposed. 
Deng, Webera, and Ahrens~\cite{Deng09} studied the achievable maximum sum rate of multi-user interference channels. 
These papers provide the theoretical foundation and effective algorithms for utility maximization of downlink traffic, but the techniques used 
cannot be directly applied for VBR video over wireless networks with buffer and delay constraints.

In~\cite{Stockhammer04, Liang07}, the authors studied the problem of one VBR stream over a given time-varying wireless channel. In~\cite{Stockhammer04}, it was shown that the separation between a delay jitter buffer and a decoder buffer is in general suboptimal, and several critical system parameters were derived. 
In~\cite{Liang07}, the authors studied the frequency of jitters under both network and video system constraint and provided a framework for quantifying the trade-offs among several system parameters.
%
In this paper, we jointly consider power control in wireless networks, playout buffers, and video frame information, 
and address the more challenging problem of streaming multiple VBR videos, and present a cross-layer optimization approach that does not depend on any specific channel or video traffic models.  

\section{Conclusion} \label{sec:conclusion}

We developed a downlink power allocation model for streaming multiple VBR videos in a cellular network. The model considers interactions among downlink power control, channel interference, playout buffers, and VBR video traffic characteristics.  
The formulated problem aims at maximizing the total transmission rate under both peak power and playout buffer overflow/underflow constraints.  
We presented a two-step approach for solving the problem and a distributed algorithm based on the dual decomposition technique. 
Our simulation studies 
validated the efficacy of the proposed algorithms.

\section*{Acknowledgment}

Shiwen Mao's research is supported in part by the National Science Foundation under Grants CNS-0953513, ECCS-0802113, IIP-1032002 and CNS-0855251, and through the Wireless Internet Center for Advanced Technology at Auburn University (under NSF Grant IIP-0738088).


\begin{thebibliography}{10}
\providecommand{\url}[1]{#1}
\csname url@samestyle\endcsname
\providecommand{\newblock}{\relax}
\providecommand{\bibinfo}[2]{#2}
\providecommand{\BIBentrySTDinterwordspacing}{\spaceskip=0pt\relax}
\providecommand{\BIBentryALTinterwordstretchfactor}{4}
\providecommand{\BIBentryALTinterwordspacing}{\spaceskip=\fontdimen2\font plus
\BIBentryALTinterwordstretchfactor\fontdimen3\font minus
  \fontdimen4\font\relax}
\providecommand{\BIBforeignlanguage}[2]{{%
\expandafter\ifx\csname l@#1\endcsname\relax
\typeout{** WARNING: IEEEtran.bst: No hyphenation pattern has been}%
\typeout{** loaded for the language `#1'. Using the pattern for}%
\typeout{** the default language instead.}%
\else
\language=\csname l@#1\endcsname
\fi
#2}}
\providecommand{\BIBdecl}{\relax}
\BIBdecl

\bibitem{Huang10}
Y.~Huang, S.~Mao, and Y.~Li, ``Downlink power allocation for stored
  variable-bit-rate videos,'' in \emph{Proc. {ICST} {QShine}'10}, Huston, TX,
  Nov. 2010.

\bibitem{Zhao09}
Y.~Zhao, S.~Mao, J.~Neel, and J.~H. Reed, ``Performance evaluation of cognitive
  radios: metrics, utility functions, and methodologies,'' \emph{Proc. {IEEE}},
  vol.~97, no.~4, pp. 642--659, Apr. 2009.

\bibitem{Su08}
H.~Su and X.~Zhang, ``Cross-layer based opportunistic {MAC} protocols for {QoS}
  provisionings over cognitive radio mobile wireless networks,'' \emph{{IEEE}
  J. Sel. Areas Commun.}, vol.~26, no.~1, pp. 118--129, Jan. 2008.

\bibitem{Alay09}
O.~Alay, P.~Liu, Z.~Guo, L.~Wang, Y.~Wang, E.~Erkip, and S.~Panwar,
  ``Cooperative layered video multicast using randomized distributed space time
  codes,'' in \emph{Proc. {IEEE} {INFOCOM} Workshops 2009}, Rio de Jeneiro,
  Brazil, Apr. 2009, pp. 1--6.

\bibitem{Mao05}
S.~Mao, S.~Lin, Y.~Wang, S.~Panwar, and Y.~Li, ``Multipath video transport over
  wireless ad hoc networks,'' \emph{{IEEE} Wireless Commun.}, vol.~12, no.~4,
  pp. 42--49, Aug. 2005.

\bibitem{Su08JNL}
H.~Su and X.~Zhang, ``Cross-layer based opportunistic {MAC} protocols for {QoS}
  provisionings over cognitive radio wireless networks,'' \emph{{IEEE} J. Sel.
  Areas Commun.}, vol.~26, no.~1, pp. 118--129, Jan. 2008.

\bibitem{Tang07JSAC}
J.~Tang and X.~Zhang, ``Cross-layer resource allocation over wireless relay
  networks for quality of service provisioning,'' \emph{IEEE Journal on
  Selected Areas in Communications}, vol.~25, no.~4, pp. 645--657, May 2007.

\bibitem{Tang08TW}
------, ``Cross-layer-model based adaptive resource allocation for statistical
  {QoS} guarantees in mobile wireless networks,'' \emph{IEEE Transactions on
  Wireless Communications}, vol.~7, no.~6, pp. 2318--2328, June 2008.

\bibitem{Zhang06CM}
X.~Zhang, J.~Tang, H.-H. Chen, S.~Ci, and M.~Guizani, ``Cross-layer based
  modeling for quality of service guarantees in mobile wireless networks,''
  \emph{IEEE Communications Magazine}, vol.~44, no.~1, pp. 100--106, Jan. 2006.

\bibitem{Tang07TW}
J.~Tang and X.~Zhang, ``Quality-of-service driven power and rate adaptation
  over wireless links,'' \emph{IEEE Transactions on Wireless Communications},
  vol.~6, no.~8, pp. 3058--3068, Aug. 2007.

\bibitem{Zhang02TON}
X.~Zhang, K.~G. Shin, D.~Saha, and D.~Kandlur, ``Scalable flow control for
  multicast {ABR} services in {ATM} networks,'' \emph{{IEEE}/{ACM} Transactions
  on Networking}, vol.~10, no.~1, pp. 67--85, Feb. 2002.

\bibitem{Liew97}
S.~Liew and H.~Chan, ``Lossless aggregation: a scheme for transmitting multiple
  stored {VBR} video streams over a shared communications channel without loss
  of image quality,'' \emph{{IEEE} J. Sel. Areas Commun.}, vol.~15, no.~6, pp.
  1181--1189, Aug. 1997.

\bibitem{Salehi98}
J.~Salehi, Z.-L. Zhang, J.~Kurose, and D.~Towsley, ``Supporting stored video:
  reducing rate variability and end-to-end resource requirements through
  optimal smoothing,'' \emph{{IEEE/ACM} Trans. Networking.}, vol.~6, no.~4, pp.
  397--410, Aug. 1998.

\bibitem{Feng00}
W.-C. Feng and M.~Liu, ``Critical bandwidth allocation techniques for stored
  video delivery across best-effort networks,'' in \emph{Proc. {IEEE}
  {ICDCS}'00}, Taipei, Taiwan, Apr. 2000, pp. 56--63.

\bibitem{Lee05}
J.~Lee, R.~Mazumdar, and N.~Shroff, ``Downlink power allocation for multi-class
  wireless systems,'' \emph{{IEEE/ACM} Trans. Networking}, vol.~13, no.~4, pp.
  854--867, Aug. 2005.

\bibitem{Lee09}
J.~Lee and J.~Kwon, ``Utility-based power allocation for multiclass wireless
  systems,'' \emph{{IEEE} Trans. Vehic Tech.}, vol.~58, no.~7, pp. 3813--3819,
  Sep. 2009.

\bibitem{Chiang05}
M.~Chiang, ``Balancing transport and physical layers in wireless multihop
  networks: jointly optimal congestion control and power control,''
  \emph{{IEEE} J. Sel. Areas Commun.}, vol.~23, no.~1, pp. 104--116, Jan. 2005.

\bibitem{Gjendemsj08}
A.~Gjendemsj, D.~Gesbert, G.~Oien, and S.~Kiani, ``Binary power control for sum
  rate maximization over multiple interfering links,'' \emph{{IEEE} Trans.
  Wireless Commun.}, vol.~7, no.~8, pp. 3164--3173, Aug. 2008.

\bibitem{Mitra93}
D.~Mitra, ``An asynchronous distributed algorithm for power control in cellular
  radio system,'' in \emph{Proc. {WINLAB} Workshop on 3{G} Wireless Information
  Networks}, New Brunswick, NJ, Oct. 1993, pp. 249--257.

\bibitem{Bertsekas95}
D.~Bertsekas, \emph{Nonlinear {P}rogramming}.\hskip 1em plus 0.5em minus
  0.4em\relax Belmont, MA: Athena Scientific, 1995.

\bibitem{Palomar06}
D.~P. Palomar and M.~Chiang, ``A tutorial on decomposition methods for network
  utility maximization,'' \emph{{IEEE} J. Sel. Areas Commun.}, vol.~24, no.~8,
  pp. 1439--1451, Aug. 2006.

\bibitem{Videtrace}
M.~Reisslein, ``Video trace library,'' Arizona State University, [online]
  Available: http://trace.eas.asu.edu/.

\bibitem{Deng09}
S.~Deng, T.~Webera, and A.~Ahrens, ``Capacity optimizing power allocation in
  interference channels,'' \emph{AEU Int. J. Electronics Commun.}, vol.~63,
  no.~2, pp. 139--147, Jan. 2009.

\bibitem{Stockhammer04}
T.~Stockhammer, H.~Jenkac, and G.~Kuhn, ``Streaming video over variable
  bit-rate wireless channels,'' \emph{{IEEE} Trans. Multimedia}, vol.~6, no.~2,
  pp. 268--277, Apr. 2004.

\bibitem{Liang07}
G.~Liang and B.~Liang, ``Balancing interruption frequency and buffering
  penalties in {VBR} video streaming,'' in \emph{Proc. {IEEE} {INFOCOM}'07},
  Anchorage, AK, May 2007, pp. 1406--1414.

\end{thebibliography}


\end{document}